\theoremstyle{plain}
\newtheorem{theorem}{Theorem}
\def\@fpheader{\relax}
\newtheorem*{theorem*}{Theorem}
\def\tilde#1{{\widetilde{#1}}}
\title{Lorentzian threads and generalized complexity}
\author[a]{Elena C\'aceres,}
\author[b]{Rafael Carrasco,}
\author[a]{Vaishnavi Patil}
\affiliation[a]{Theory Group, Weinberg Institute, Department of Physics, University of Texas at Austin,\\
2515 Speedway, Austin, Texas 78712, USA.}
\affiliation[b]{Instituto de Física Teórica UAM/CSIC, Calle Nicolás Cabrera 13-15, Madrid 28049, Spain}
\emailAdd{elenac@utexas.edu}
\emailAdd{rafael.carrasco@ift.csic.es}
\emailAdd{vaishnavi.patil@utexas.edu}
\abstract{Recently, an infinite class of holographic generalized complexities was proposed. These gravitational observables display the behavior required to be  duals of complexity, in particular, linear growth at late times and switchback effect. In this work, we aim to understand generalized complexities in the framework of Lorentzian threads. We reformulate the problem in terms of thread distributions and measures and present a program to calculate the infinite family of codimension-one observables.   We also outline a
path to understand, using threads,  the more subtle case of codimension-zero observables.}
\begin{document}
\maketitle
\vfill
\pagebreak

\section{Introduction} \label{Introduction}
Undoubtedly, the Ryu-Takayanagi (RT)
proposal relating the entanglement entropy of a boundary region to the minimal area of a codimension-two bulk surface,  is a cornerstone in the connection between quantum information and quantum gravity. The RT proposal has become an essential tool in studying foundational aspects of holography  such as bulk reconstruction and the emergence and dynamics of spacetime. An alternative way to understand holographic entanglement entropy was proposed in \cite{Freedman:2016zud}. This formalism does not rely on bulk surfaces or minimal areas;  it uses convex optimization to identify the entanglement entropy as the maximization of a divergenceless norm-bounded vector flow. Specifically, at a constant time we have, 

 \begin{equation}\label{BitThreadReform}
S_A=\frac1{4G_N} \max_{v \in {\cal F}}\int_{A}\sqrt{h} \, n_\mu v^\mu\,, \qquad {\cal F}\equiv\{v \, \vert\, \nabla_\mu v^\mu=0,\, \abs{v}\leq 1\}.
\end{equation}
The integral lines of the flow are dubbed \emph{bit-threads} \cite{Freedman:2016zud}. This new language opened the possibility of understanding holographic constructs using convex optimization techniques. This approach has been a fruitful  avenue of research in past years. Bit threads have been used to study multipartite entanglement \cite{Freedman:2016zud, Harper:2019lff, Bao:2017nhh,Hubeny:2018bri,Agon:2018lwq}, holographic monogamy of mutual information \cite{Cui:2018dyq,Agon:2021tia}, the hypergraph entropy cone \cite{Bao:2020uku}, metric reconstruction \cite{Ag_n_2021}, etc.  Furthermore, recent developments extending the thread formulation to Lorentzian manifolds \cite{Pedraza:2021fgp} showed that quantities other than entanglement can be formulated in this language. We know now that \emph{entanglement is not enough} to describe the full spacetime, in particular the interior of black holes. Other constructs are necessary, for example, complexity. While in quantum information there are many notions of complexity,  holographic complexity is thought to be dual to circuit or gate complexity. This quantifies  the number of unitary operations that have to be applied to a reference state to get a target state within a tolerance.


The two main candidates are Complexity-Volume(CV) and Complexity-Action (CA). These two holographic constructs explore the interior of a black hole and have the properties expected from circuit complexity.
\begin{enumerate}
    \item Complexity-Volume: the volume of a maximal codimension-one Cauchy slice $\Sigma$ anchored at the boundary Cauchy slice $\sigma_A$ on which the CFT state is defined \cite{Susskind:2016,Susskind:2014jwa}.
    \begin{equation}
        C_V(\sigma_A) = \dfrac{1}{G_N l} \max_{\Sigma\sim A}\text{Vol} (\Sigma(A)).
    \end{equation}
    \item Complexity-Action: the bulk action evaluated on the Wheeler-De Witt patch associated to $\sigma_A$ \cite{Brown:2015bva,Brown:2015lvg}.
    \begin{equation}
        C_I (\sigma_A) = \dfrac{I_{WDW}}{\pi \hbar}.
    \end{equation}
\end{enumerate}
In  \cite{Headrick:2017ucz}, the authors derived a reformulation of the CV proposal in terms of flows by proving the max cut-min flow theorem.
Together with the bit threads construction in \cite{Freedman:2016zud}, this entailed developing the mathematical framework necessary to work with \emph{Lorentzian threads}, and allowed \cite{Pedraza:2021mkh} to formulate CV in terms of Lorentzian flows. In \cite{Pedraza:2021fgp}  the authors provided explicit constructions and an interlink between the discrete definition of complexity from the CFT perspective and the continuous definition of this quantity. More concretely, the integral lines of these flows correspond with ``gatelines", \textit{i.e.} timelike curves that represent unitary operations that transform a reference state to a target state.
Thus, there is a clear picture of how to understand CV in terms of threads. However, this understanding is missing for CA. Furthermore, recently, it was shown \cite{Belin:2021bga, Belin:2022xmt} that CV and CA are just two members of an infinite family of observables that exhibit the linear growth and switchback behavior of complexity. This freedom  in the definition of holographic complexity need not be perceived as a drawback; rather, it can be seen as a holographic feature that reflects the ambiguities existing in the definition of quantum complexity. However, much work needs to be done to understand this mapping. It is also possible that certain features will point to a particular type of holographic observable as the dual of complexity. Clearly, an understanding of the different facets and properties of this infinite class of gravitational observables is needed. In this work, we focus on one such facet. It is currently not known how to describe this infinite family of complexities, these \emph{generalized complexities}\footnote{We use the term "generalized complexities" to refer to the observables obtained from the ``complexity=anything" proposal of \cite{Belin:2021bga, Belin:2022xmt}. We adopt this name to emphasize that these observables are a generalization of  the CV, CA, and CV2.0 proposals.}, in terms of threads and optimization problems. In this paper we take a first step to remedy this situation. 

Let us briefly review the construction of generalized complexities. The generalized volume complexities \cite{Belin:2021bga} are  codimension-one, {\it i.e.} $d-$dimensional, observables \footnote{ Throughout this paper we  are working in a $d+1$ asymptotically $AdS$ space.} 
that involve two arbitrary functions, $F_1$ and $F_2$. The first step is to find the bulk surface $\Sigma$ anchored on a fixed  boundary time slice that maximizes the functional
\begin{equation}\label{eq:cg_functional}
   W_{F_2}= \int_\Sigma d^d \sigma \,\sqrt{h} F_2\,( g_{\mu \nu};X^\mu).
\end{equation}
Varying over the embeddings $X^\mu$ that define  $\Sigma$,  $\delta_X [W_{F_2}]=0$, we obtain the hypersurface $\tilde{\Sigma}$ that maximizes \eqref{eq:cg_functional}. Having determined $\tilde{\Sigma}$ we can evaluate the observable,
\begin{equation}\label{eq:gc_codim1}
\mathcal{O}_{F_1, \tilde{\Sigma}} =\frac{1}{G_N \ell} \int_{\tilde{\Sigma} }d^d \sigma \,\sqrt{h}\, F_1\,( g_{\mu \nu};X^\mu).
\end{equation}
If  $F_1=F_2=1$ we reproduce the standard maximal volume prescription, CV. While the family of observables obtained from \eqref{eq:gc_codim1} is large, it does not include CA or any observable\footnote{ Another common codimension-zero observable by the name of CV2.0 evaluates the spacetime volume of the WDW patch \cite{Couch_2017}} defined in a codimension-zero hypersurface.  The necessary extension of the construction was presented  in \cite{Belin:2022xmt} and involves six independent functions, $G_1, G_2, F_{1,+}, F_{1,-}, F_{2,+}$ and  $ F_{2,-}$. As before, we first determine the region that extremizes certain functional and then evaluate the observable in that region. More explicitly, consider the functional
\begin{align}\label{eq:gc_functional}
  W_{F_2\pm, G_2} = \int_{\Sigma_{+} }d^d \sigma \,\sqrt{h}\, F_{2,+}\,( g_{\mu \nu};X^\mu) &+\int_{\Sigma_{-} }d^d \sigma \,\sqrt{h}\, F_{2,-}\,( g_{\mu \nu};X^\mu) \nonumber\\
  & +\, \frac{1}{\ell} \int_{\mathcal{V} }d^{d +1}x \,\sqrt{-g}\, G_2\,( g_{\mu \nu};X^\mu),
\end{align}
where $\mathcal{V}$ is a region of spacetime bounded by $\Sigma_+$ and $\Sigma_-$.
Extremizing $ W_{F_2\pm, G_2}$ by varying the boundaries $\Sigma_{+}$ and $\Sigma_{-}$,  yields a  codimension-zero region of space, denoted $\tilde{\mathcal{V}}$ bounded by $\tilde \Sigma_{+}$ and $\tilde \Sigma_{-}$ where we evaluate the generalized observable,  

\begin{align}\label{eq: codim-0 observable}
\mathcal{O}_{F_1\pm, G_1, \Tilde{\Sigma}_{\pm}} =\frac{1}{G_N \ell} \int_{\tilde{\Sigma}_{+} }d^d \sigma \,\sqrt{h}\, F_{1,+}\,( g_{\mu \nu};X^\mu) +&\frac{1}{G_N \ell} \int_{\tilde{\Sigma}_{-} }d^d \sigma \,\sqrt{h}\, F_{1, -}\,( g_{\mu \nu};X^\mu)\nonumber \\+&\frac{1}{G_N \ell^2} \int_{\tilde{\mathcal{V}}}d^{d+1}x \,\sqrt{-g}\, G_1 \,( g_{\mu \nu};X^\mu) .
\end{align}
Note that \eqref{eq: codim-0 observable} defines an infinite class of gravitational observables that were shown \cite{Belin:2022xmt} to display the  universal
features required to be good candidates for holographic complexity. CA and CV2.0 are just two particular cases  that arise by choosing specific values of the functions $F_{2\pm},F_{1\pm}, G_1, G_2$ and  taking appropriate limits.  In this paper we aim to understand generalized complexities in the framework of threads. We find that it is useful to reformulate the problem in terms of thread distributions and measures. We present a program to calculate the infinite family of codimension-one observables   and outline a path to understand the more subtle case of codimension-zero observables. 

\section{Lorentzian threads and measures}\label{sec:Lorentzian_threads_and_measures}
In computational physics, circuit complexity is defined as the minimum number of unitary gates required to prepare a target state given an initial reference state within a defined tolerance. In a CFT, we expect that  the complexity of a state will grow over time due to Hamiltonian
 evolution. Holographically, this increase of complexity is conjectured to be  dual to the late-time growth of
the interior of a double-sided eternal AdS black hole. However, as reviewed in the Introduction, there is a large family of bulk constructions that are potentially dual to complexity. 
 The first and most studied proposals are Complexity-Volume (CV) and Complexity-Action (CA). 

In this section, we will review how Complexity-Volume (CV) can be understood in the language of flows and Lorentzian threads. This is conceptually similar to reformulating RT formula for entanglement entropy in the language of bit threads \cite{Freedman:2016zud}. After this, we will propose a new approach to understand Lorentzian threads in terms of measures. With this object, one would impose a constraint on the density of threads and propose a maximization program whose solution is in agreement with the CV proposal.

\subsection{Review of Lorentzian threads and CV}\label{sec: Review of Lorentzian threads and CV}


Before going into  the thread formulation for complexity, we will  establish some notation.  Let  $\mathcal{M}$  be a $(d+1)$-dimensional compact, oriented, Lorentzian manifold with boundary $\partial \mathcal{M}$. Consider a region $A$ on $\partial \mathcal{M}$ such that its causal future coincides with itself, that is, $J^+(A)=A$. This condition is imposed to guarantee the existence of a surface homologous to $A$. The boundary of $A$ will be denoted as $\sigma_A$,  $\partial A = \,\sigma_A$.  Thus, $\sigma_A$ is a codimension-two surface and as a consequence of its causal structure, $\sigma_A$ must be a Cauchy surface on the boundary manifold \cite{Headrick:2017ucz}. Let $\Sigma$ be a bulk codimension-one slice anchored on $\sigma_A$. We say that $\Sigma $  is homologous to $A$ ($\Sigma\sim A$) if there exists a bulk region $r(A)$ such that $\partial r \backslash \partial \mathcal{M} = -(\Sigma \backslash \partial \mathcal{M})$ (see figure \ref{general threads diagram} for an illustration).

 Given this setup, we can define a timelike flow $v$ 
and \textit{Lorentzian threads} as the integral lines of this vector field \cite{Headrick:2017ucz}. The timelike nature of this 
vector flow makes it necessary to reexamine some mathematical results previously established for this kind of spacelike flows. 
In \cite{Headrick:2017ucz}, the authors proved the Max Cut-Min Flow theorem for Lorentzian flows. Using this result, the volume of maximal slice $\tilde{\Sigma}$ anchored to $\sigma_A$ can be related to the minimal flux of Lorentzian flows through region $A$. Let us review this in more detail. 


A Lorentzian flow is defined as a divergenceless, future-directed vector field on $\mathcal{M}$ with a lower bound on its norm
\begin{equation}\label{eq: flow constraints}
    \nabla_{\mu} v^{\mu} = 0, \quad v^0 > 0, \quad |v| \geq 1.
\end{equation}
The flux through the boundary subregion $A$ is
\begin{equation}
    \int_A *v = \int_{\Sigma} *v = \int_{\Sigma} d^{d}\sigma\sqrt{h} n_{\mu} v^{\mu},
\end{equation}
where $*$ represents the Hodge dual, $n^{\mu}$ is the unit normal to $\Sigma$ and $\sqrt{h}$ is the induced volume element and $v$ is the one-form dual to the flow. The first equality follows from the divergenceless condition of the flow and second comes from the definition of flux. The boundedness of the vector field immediately gives a lower bound on the flux as
\begin{equation}
    \int_A *v \geq \text{Vol} (\tilde{\Sigma}).
\end{equation}

The max flow-min cut theorem states that, for the optimal flow, this inequality is saturated \cite{Headrick:2017ucz}, what implies
\begin{equation}
    \inf_v \int_A *v = \sup_{\Sigma \sim A} \text{Vol} (\Sigma).
\end{equation}


Thus, the maximal volume slice $\tilde{\Sigma}$ acts as an inverse bottleneck to minimum flux of the flow. The integral lines of a Lorentzian flow are referred to as \textit{Lorentzian threads} \cite{Pedraza:2021fgp}. Just like the Lorentzian vector flow, the threads are timelike and future-directed.  And the divergenceless condition imposed on the vector field guarantees that  the threads end on the boundary.  

Lorentzian threads provide a natural way of expressing  bulk geometric quantities related to CV. It is easy to prove properties like superadditivity of subregion complexity using flows and it provides new bounds on complexity as well \cite{Pedraza:2021fgp}. Heuristically,  we can think of these threads as ``gatelines" which are trajectories through spacetime that take a given reference state to the final target state. Complexity is then the minimum number of gatelines crossing the Cauchy slice on whose boundary the state is defined. This is conceptually similar to the circuit complexity idea of minimum number of unitary gates needed to prepare a final state.

\subsection{Complexity volume using  measures }\label{sec: CV -- using measures}

In the next section we will show that to understand generalized complexity in terms of Lorentzian threads it is convenient to reformulate the problem in terms of measures. As a warm up, let us explore the definition  of CV in this language. 

It might seem that, in order to solve the problem in terms of threads, one must determine the optimal flow and solve its integral lines. However, there is a better way to approach this problem. We can work directly with the set of all possible threads and propose a functional that has to be optimized. As a result of this program, the actual configuration of threads is found. With this object, it is first necessary to introduce the concept of a measure.

Given $\mathcal{F}$ a family of subsets of another set $\mathcal{P}$\footnote{For this family $\mathcal{F}$ to be well defined, it must include both the empty set $\emptyset$ and $\mathcal{P}$, and be closed under union and difference, \textit{i.e.}, if $A,B\in \mathcal{F}$ then $A\cup B\in \mathcal{F}$, $A\backslash B\in\mathcal{F}$ \cite{Bashkara1983}.},  a measure is a function $\mu:\mathcal{F}\rightarrow \mathbb{R}\cup\{-\infty,\infty\}$ that satisfies the two following conditions \cite{Bashkara1983}:

\begin{itemize}
    \item Acts trivially on the empty set, \textit{i.e.} $\mu(\emptyset)=0$.
    
    \item For a series of disjoints subset $F_n$ with $n\geq 1$ such that $\cup_{i\geq 1}F_i \subset \mathcal{F}$,
    \begin{equation}
        \displaystyle\nonumber \mu\left(\bigcup_{i\geq 1}F_i\right)= \sum_{i\geq 1} \mu(F_i).
    \end{equation}
\end{itemize}

We are interested in  $\mathcal{P}$ being  the set of all timelike and future directed curves going from $A^c=\partial \mathcal{M}\backslash A$ to $A$. These curves are just a representation of the Lorentzian threads introduced in the previous subsection. The measures we work with map a thread to the set $\{0,1\}$. In other words, a measure assigns a weight of $0$ or $1$ to each element in $\mathcal{P}$.

In order to propose a program that determines complexity-volume in terms of measures, we need to rewrite first the optimization program in terms of flows, given in \cite{Headrick:2017ucz}, using threads. In particular, both the objective and constraints must be translated into this new language. The divergenceless condition in \eqref{eq: flow constraints} is automatically implemented since, as was previously mentioned, threads cannot have one end in the bulk, but both ends have to lie in the boundary. The norm bound might be imposed by defining a delta function \cite{Headrick:2023}:

\begin{equation}
    \Delta(p,x)=\int_p ds \delta(x-y(s)),
\end{equation}

where $p\in \mathcal{P}$, $y(s)$ is the trajectory of the thread $p$ and $s$ is a proper parameter. This function counts the number of times a curve $p$ passes through the point $x$. Any element of $\mathcal{P}$ is by definition timelike and future directed and thus,  $\Delta(x,p)$ only takes the values $0$ or $1$. This function allows us to set a thread density $\rho(x)$ as

\begin{equation}
    \rho(x)=\int_\mathcal{P} d\mu \Delta(x,p),
\end{equation}

\noindent where the previous expression might be understood as a sum of the $\Delta$ function for each possible thread $p$ multiplied by the factor $\mu(p)$. This formula might be written as $\sum_{p\in \mathcal{P}}\mu(p)\Delta(x,p)$, but we do not employ the sum symbol since $\mathcal{P}$ is not a countable set. 

With the previous definition, the norm bound reads

\begin{equation}
    \rho(x)\geq 1,\ \forall x \in \mathcal{M}.
\end{equation}

Finally, we propose the objective to be given by the sum of weights of every single thread in $\mathcal{P}$. In this approach, the complexity of a CFT state on the slice $\sigma_A$ is then given by the following optimization program\footnote{We can set $G_N\ell=1$ and restore this factor when necessary.}

\begin{equation}
    \mathcal{C}=\min \frac{1}{G_N\ell}\int_{\mathcal{P}}d\mu,\ \ \textup{s.t. } \rho(x)\geq 1 \ \forall x\in \mathcal{M}.
    \label{eq: CV measure program}
\end{equation}

Solving \eqref{eq: CV measure program} is not an easy task. Nevertheless, provided that both the objective and constraint are linear functions of the measures (and hence, convex), the techniques of convex optimization will allow us to dualize it to a more tractable program. We refer the reader to \cite{Headrick:2017ucz,boyd_vandenberghe_2004} for a deeper insight into this topic. In particular, starting from \eqref{eq: CV measure program},  one can construct the following Lagrangian function (setting $G_N l = 1$)

\begin{equation}
\begin{split}
    L=&\int_{\mathcal{P}}d\mu+\int_\mathcal{M}d^{d+1}x \sqrt{-g}\lambda(x) \left(1-\rho(x)\right)\\
    =&\int_{\mathcal{P}}d\mu+\int_\mathcal{M}d^{d+1}x \sqrt{-g}\lambda(x) \left(1-\int_{\mathcal{P}}d\mu\Delta(x,p)\right),
\end{split}
\label{eq: primal lagrangian cv}
\end{equation}

\noindent where $\lambda(x)$ is a non-negative Lagrange multiplier\footnote{The non-negative character of this Lagrange multiplier comes from the fact that it is associated to an inequality constraint. If it were given for an equality constraint it may take any real value. More details about convex optimization can be found in \cite{Headrick:2017ucz,boyd_vandenberghe_2004}}. Note that at each $x \in \mathcal{M}$, there exists a constraint and hence a corresponding Lagrange multiplier. For this reason, an integral over all spacetime points of the function $\lambda(x)$ has to be performed. After rearranging the terms  in \eqref{eq: primal lagrangian cv}, we find

\begin{equation}\label{eq:dual_cv}
    L=\int_\mathcal{M}d^{d+1}x \sqrt{-g}\lambda(x) +\int_{\mathcal{P}}d\mu \left(1-\int_{p}ds\lambda(s)\right).
\end{equation}

\noindent For the Lagrangian to be lower bounded, we require the terms in the parenthesis to be non-negative. Written in this way, it is clear that \eqref{eq:dual_cv} be  interpreted as the Lagrangian of a \emph{dual} program, a program where the measure $\mu$ is  the Lagrange multiplier, 

\begin{equation}
    \max \int_\mathcal{M}d^{d+1}x \sqrt{-g}\lambda(x) ,\ \ \textup{s.t. } \int_{p}ds\lambda(s)\leq 1 \ \forall p\in \mathcal{P}.
    \label{eq: dual CV measure program}
\end{equation}

If the solution to \eqref{eq: CV measure program} is $\tilde{m}$ and that of \eqref{eq: dual CV measure program} $\tilde{p}$,  \textit{weak duality}  guarantees that $\tilde{p}\leq \tilde{m}$. But for the equality to hold, Slater's condition must be satisfied \cite{boyd_vandenberghe_2004}. 
Recall that Slater's condition is a necessary condition for strong duality. It states that, if the primal program \eqref{eq: CV measure program} admits a feasible solution $\tilde{\mu}$ (not necessarily optimal) which strictly satisfies the inequality constraints ($\tilde{\rho}(x)=\int_\mathcal{P} d\tilde{\mu} \Delta(x,p)>1$), then the optimal solutions of \eqref{eq: CV measure program} and \eqref{eq: dual CV measure program} coincide.

We will provide a heuristic argument showing that, indeed, Slater's condition is satisfied. One can cover the whole spacetime manifold with tubes of transverse area equal to one, even if they intersect. Inside each tube, one can insert finite (but larger than one) number of threads. As a consequence of this construction, the density of threads $\rho(x)$ is larger than $1$ for each point $x\in \mathcal{M}$ and, therefore, both programs share the same optimal solution. 

Our next goal is to show that the optimum of \eqref{eq: dual CV measure program} is the volume of the maximal slice homologous to $A$, which will be denoted $\tilde{\Sigma}$. Setting $\lambda(x)$ to be a delta function supported on $\tilde{\Sigma}$, one can realize that the constraints are satisfied. Thus

\begin{equation}
    \max \int_\mathcal{M}d^{d+1}x \sqrt{-g}\lambda(x)\geq \textup{Vol}(\tilde{\Sigma})
\end{equation}

In order to show that the equality in the previous expression holds, we will need the following theorem:\footnote{We provide the proof of this theorem in Appendix \ref{app:1}.}

\begin{restatable}{theorem}{thm}\label{th: theorem}
    Let $\mathcal{M}$ be a Lorentzian manifold, $A,\ B$ two complementary subsets  of the boundary such that $J^+(\partial A)|_{\partial \mathcal{M}}=B$ and $J^-(\partial B)|_{\partial \mathcal{M}}=A$, $\mathcal{P}$ the set of timelike, future directed curves going from $A$ to $B$ and $\lambda(x)$ a non-negative function on $\mathcal{M}$. Then, statements 1) and 2) are equivalent:
    \begin{align*}
       1)\,&\exists \psi:\mathcal{M}\rightarrow [-1/2,1/2]\ \ s.t.\ \psi|_A=-1/2, \psi|_B=1/2,\ \ |d\psi|\geq \lambda,\ d\psi\ timelike\ FD\\
      2)\, & \forall p\in \mathcal{P},\ \int_p ds\lambda\leq 1,\quad \text{with s  the proper distance along p.}
    \end{align*}
\end{restatable}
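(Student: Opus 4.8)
The statement is an equivalence between the existence of a "potential" function $\psi$ interpolating between $-1/2$ on $A$ and $1/2$ on $B$ with a timelike future-directed gradient bounded below by $\lambda$, and a "flux bound" asserting that every timelike future-directed curve from $A$ to $B$ has $\lambda$-length at most $1$. My plan is to prove the two directions separately, with the "hard" direction ($2 \Rightarrow 1$) constructing $\psi$ explicitly as a kind of distance-to-$A$ function measured with $\lambda$ as a conformal-type weight.

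\textbf{Direction $1 \Rightarrow 2$ (easy).} Suppose such a $\psi$ exists. Given any $p \in \mathcal{P}$ from $A$ to $B$, parametrized by proper distance $s \in [0, \mathsf{L}]$, I would compute $\int_p ds\, \lambda \le \int_p ds\, |d\psi|$. Since $p$ is timelike future-directed and $d\psi$ is timelike future-directed, the pairing $\dot y^\mu \partial_\mu \psi$ has a definite sign, and by the (reverse) Cauchy–Schwarz inequality for timelike vectors in Lorentzian signature one gets $|d\psi| \le |\dot y^\mu \partial_\mu \psi| = \frac{d}{ds}\psi(y(s))$ along $p$ (using $|\dot y| = 1$). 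Hence $\int_p ds\, \lambda \le \int_0^{\mathsf{L}} ds\, \frac{d}{ds}\psi(y(s)) = \psi(B\text{-endpoint}) - \psi(A\text{-endpoint}) = 1/2 - (-1/2) = 1$. The only subtlety is justifying the reverse Cauchy–Schwarz step and that $d\psi$ timelike forces the contraction with any future-directed timelike vector to dominate $|d\psi|$; this is a standard fact about the interior of the lightcone, which I would state cleanly as a lemma.

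\textbf{Direction $2 \Rightarrow 1$ (the main obstacle).} This is the heart of the matter, and it is the analogue of constructing the level-set function in the max-flow–min-cut / Lorentzian-thread arguments of \cite{Headrick:2017ucz}. The natural candidate is
\begin{equation}
\psi(x) = -\tfrac12 + \sup_{q}\, \int_q ds\, \lambda,
\end{equation}
where the supremum runs over all timelike future-directed curves $q$ from $A$ to $x$ (with the convention that $\psi|_A = -1/2$). Hypothesis $2$, together with the causal condition $J^+(\partial A)|_{\partial\mathcal M} = B$, should guarantee $\psi \le 1/2$ everywhere and $\psi|_B = 1/2$ (any point of $B$ is reached by a curve that can be continued/was extremal). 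The content to verify is: (i) $\psi$ is finite and Lipschitz-like enough to have a gradient a.e.; (ii) $d\psi$ is timelike and future-directed wherever defined — this follows because moving in a future-timelike direction strictly increases $\psi$ (one can append a short timelike segment), while moving spacelike changes it by a controlled amount, so the gradient lies in the future cone; (iii) $|d\psi| \ge \lambda$ — this is the "eikonal" inequality: along an optimal or near-optimal curve $q$ to $x$, $\psi$ increases at rate exactly $\lambda$, forcing the directional derivative of $\psi$ in the curve's direction to be $\ge \lambda$, and combined with (ii) this gives the norm bound. I expect the technical obstacle to be regularity: $\psi$ defined by a sup of path integrals need not be smooth, so I would either (a) work with the viscosity/a.e. notion of $|d\psi|$ and note that the thread/flow formalism only needs the bound distributionally, or (b) smooth $\lambda$ slightly, run the argument, and take a limit. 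I would also need the "reverse triangle inequality" for the $\lambda$-length to get the global bound $\psi \le 1/2$ from the local data, which is exactly where hypothesis $2$ enters.

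\textbf{Assembly.} Having both directions, I would close by remarking that $\psi$ (shifted to $\psi + 1/2 \in [0,1]$ if one prefers) is precisely the object that, via $\lambda = $ delta-function-on-$\tilde\Sigma$, recovers the statement that the dual program \eqref{eq: dual CV measure program} is bounded by $\mathrm{Vol}(\tilde\Sigma)$, thereby completing the strong-duality chain in the main text. The delicate point worth flagging in the write-up is that the equivalence is stated for a fixed $\lambda$, so no optimization over $\lambda$ is performed here — the theorem is a pointwise/pathwise duality statement, and its use in the complexity argument comes from feeding in the optimal $\lambda$.
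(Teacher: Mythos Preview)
Your direction $1 \Rightarrow 2$ is correct and matches the paper's argument essentially line for line: integrate $d\psi$ along $p$, invoke the reverse Cauchy--Schwarz inequality for timelike vectors to get $1 \geq \int_p ds\,|d\psi| \geq \int_p ds\,\lambda$.

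The gap is in $2 \Rightarrow 1$. Your candidate $\psi(x) = -\tfrac12 + \sup_{q:A\to x}\int_q ds\,\lambda$ does satisfy $\psi|_A = -1/2$, $|d\psi|\geq\lambda$, and $\psi\leq 1/2$, but in general it does \emph{not} satisfy $\psi|_B = 1/2$. Hypothesis 2) only gives the \emph{inequality} $\int_p ds\,\lambda \leq 1$ for every curve; nothing forces the supremum to equal $1$ at points of $B$. Concretely, take $\lambda$ strictly positive but uniformly small: then for any $b\in B$ the sup over curves from $A$ to $b$ is strictly less than $1$, so your $\psi(b) < 1/2$. Your parenthetical justification (``any point of $B$ is reached by a curve that can be continued/was extremal'') conflates extremality with saturation of the bound.

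The paper repairs exactly this point by introducing the complementary function $\psi_+(y) = \sup_{q:y\to B}\int_q ds\,\lambda$ alongside your $\psi_-(y)$ and then setting
\[
\psi(y) = \frac{\psi_-(y) - \psi_+(y)}{2\bigl(\psi_-(y)+\psi_+(y)\bigr)}.
\]
Because $\psi_-|_A = 0$ and $\psi_+|_B = 0$, the boundary values $\psi|_A = -1/2$ and $\psi|_B = 1/2$ are now automatic. The normalization by $\psi_-+\psi_+$ (which is $\leq 1$ by hypothesis 2) only \emph{increases} the gradient, so the bound $|d\psi|\geq\lambda$ survives. The Hamilton--Jacobi step in the paper plays the role of your ``eikonal'' remark in showing $|d\psi_\pm|\geq\lambda$; that part of your sketch is fine. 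What you are missing is this two-sided construction and the rational normalization.
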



 As a consequence of Theorem \ref{th: theorem} we have,

\begin{equation}
    \max \int_\mathcal{M}d^{d+1}x \sqrt{-g}\lambda(x)\leq \int_\mathcal{M}d^{d+1}x \sqrt{-g}|d\psi|.
    \label{eq: ineq 1}
\end{equation}

 Following  \cite{Headrick:2017ucz}, we define the region

\begin{equation}
    r(c)=\{x\in \mathcal{M}|\psi(x)\geq c\},
\end{equation}
\noindent and its closure on the bulk $\Sigma(c)=\partial r(c)\backslash \partial M$. Due to the fact that $d\psi$ is future-directed and timelike, these $\Sigma(c)$ are slices\footnote{A slice is defined  \cite{Headrick:2017ucz} as a compact codimension-one hypersurface-with-boundary in $\mathcal{M}$ which is everywhere light or spacelike with future directed normal and whose interior is embedded in $\mathcal{M}$.}. Moreover, as $\psi|_A=-1/2$ and $\psi|_B=1/2$, they all must also be homologous to $A$, since all the level sets must have $\sigma_A=\partial A$ as boundary. Making use of the co-area formula \cite{Treude:2012np}

\begin{equation}
    \int_{\mathcal{M}} d^{d+1} x \sqrt{-g}|d\psi|=\int_{-1/2}^{1/2} dc \textup{Vol}(\Sigma(c))\leq \max_{\Sigma(c)\sim A} \textup{Vol}(\Sigma(c)).
\end{equation}

This last inequality, together with \eqref{eq: ineq 1}, finally shows that
\begin{equation}
   \mathcal{C}= \min \frac{1}{G_N\ell}\int_{\mathcal{P}}d\mu,\ \ \textup{s.t. } \rho(x)\geq 1 \ \forall x\in \mathcal{M}= \max_{\Sigma(c)\sim A} \textup{Vol}(\Sigma(c)).
\end{equation}

Thus, we have shown that using the language of measures, CV is obtained as a solution of the program in  \eqref{eq: CV measure program}.

\begin{figure}
    \centering
    \includegraphics[scale=0.3]{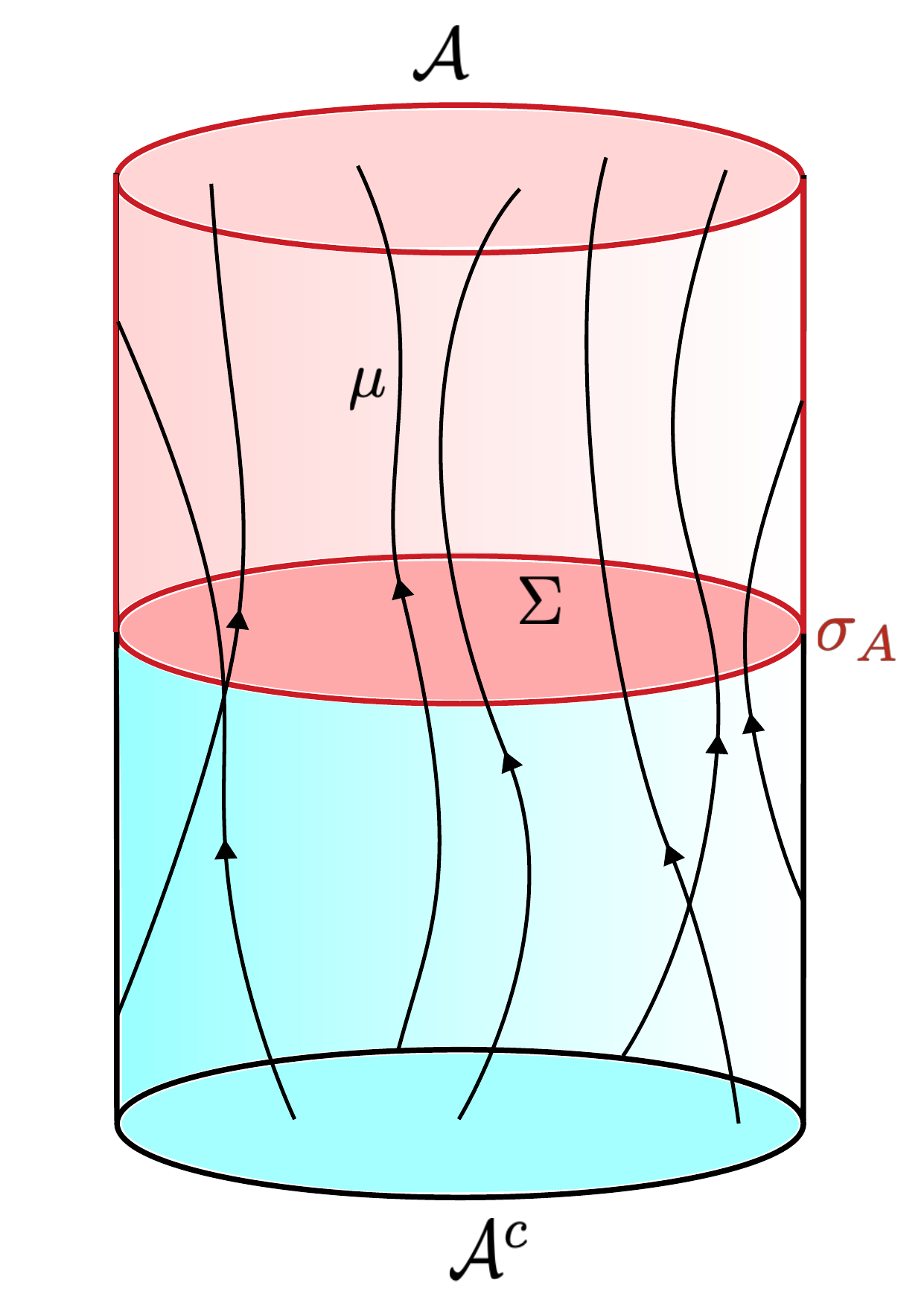}
    \caption{Threads with measure $\mu$ crossing the codimension-one surface $\Sigma$}
    \label{general threads diagram}
\end{figure}

\section{Threads and generalized complexity}\label{sec:threads_and_generalized_complexity}

In this section, we derive the main result of our work. We propose a program to find a Lorentzian thread configuration that calculates the generalized complexities of \cite{Belin:2021bga, Belin:2022xmt}. Specifically, we propose minimal measure threads with certain density constraints and apply the techniques of convex optimization to dualize the minimal program with constraints into a maximal program. 

\subsection{Codimension-one observables}\label{sec: codimension 1}

The observables described in \cite{Belin:2021bga} are calculated in two steps. The first step is to select a codimension-one surface $\tilde{\Sigma}$ that optimizes the integral of an arbitrary function $F_2(g_{\mu\nu},X^{\mu})$ on the surface. The next step is to evaluate the integral of a second arbitrary function $F_1(g_{\mu\nu},X^{\mu})$ on the same surface. 

To be a meaningful definition of circuit complexity, the final observable must be positive. However, the functions $F_1$ and $F_2$ may be positive or negative on $\tilde{\Sigma}$. We find that to find the optimal surface $\tilde{\Sigma}$ we only need positive measures but to evaluate the observable we need to introduce a rescaling that effectively ammounts to considering positive and negative measures.

Let us first focus on obtaining the optimal surface $\tilde{\Sigma}$ which extremizes the functional \eqref{eq:gc_codim1}. 
In section \ref{sec: CV -- using measures} we showed how to find the maximal volume hypersurface homologous to $\sigma_A$ from the thread formulation. 
Our goal here is to propose a program that will determine $\Sigma$ in the case of a generalized functional.
Unlike in CV, where we looked for a \emph{maximum} solution, in the present case we have an \emph{extremization} program. In principle, there may be more than one extremal surface. If that is the case, we should choose the one maximizing the complexity. 
However, we are still faced with the possibility of $F_2$ being negative.


If $F_2(x)$ is negative at certain codimension-zero subregions in spacetime, an extremal slice might have some patches lying inside these subregions. The functional will be maximal if, inside this patches, the maximal hypersurface is null, which is not a sensible result. So we use the absolute value of $F_2$ which turns an optimization program including both maxima and minima into a single maximisation program\footnote{One approach is to consider both positive and negative measured threads in the optimization program. However this does not work as we show in Appendix \ref{app B}}. In the appendix \ref{app:1}, we show the surfaces extremizing the functional $W(F_2)$ also extremize $W(|F_2|)$\footnote{In \cite{Chandra_2023} it was pointed out that the positivity of complexity required taking only positive functions $F_2$. This was implicitly done in \cite{Belin:2021bga,Belin:2022xmt}}. 
Nevertheless, positive measures cannot capture all the information of the sign of $F_2$ so, in order to do it, one solves the problem (just with positive measures) and then constructs the surface $\tilde{\Sigma}$. Then, those threads that intersect $\tilde{\Sigma}$ at a point x such that $F_2(x)$ is negative, must have their sign changed.

In contrast to the previous case, where the output of the optimization program was the volume of the maximal surface homologous to $\sigma_A$, we want a program whose solution is the maximum among the integral of $F_2(x)$ over all the surfaces homologous to this boundary region. To implement this, consider the following optimization program:
\begin{equation}
    \text{min} \quad \int_\mathcal{P}d\mu, \quad \textup{s.t.} \quad \rho(x)\geq |F_2(x)|,\quad \forall x \in \mathcal{M},
\label{C=any opt program}
\end{equation}
where $\mu(p)$ is non-negative and $F_2(x)$ is some function defined at every point in the manifold (which may take positive or negative values). The Lagrangian ($\lambda(x) \geq 0$) is:
\begin{equation}
    \begin{split}
        L[\mu, \lambda] & = \int_\mathcal{P}d\mu + \int_\mathcal{M} d^{d+1} x \sqrt{-g} \lambda(x) \left( |F_2(x)| - \int_{\mathcal{P}} d\mu\Delta (x,p) \right) \\
        & = \int_\mathcal{M} d^{d+1}x \sqrt{-g} \lambda(x) |F_2(x)| + \int_{\mathcal{P}} d \mu \left( 1 - \int_\mathcal{M} d^{d+1}x \sqrt{-g} \lambda(x) \Delta(x,p) \right) \\
        & = \int_\mathcal{M} d^{d+1}x \sqrt{-g} \lambda(x) |F_2(x)| + \int_{\mathcal{P}} d \mu \left( 1 - \int_p ds \lambda(y(s)) \right). \\
    \end{split}
\end{equation}

The dual program is then:
\begin{equation}
    \text{max} \int_\mathcal{M} d^{d+1}x \sqrt{-g} \lambda(x) |F_2(x)|,\quad \textup{s.t.} \quad \int_p ds\lambda\leq 1,\quad \forall p\in\mathcal{P}, 
\label{C=any dual program}
\end{equation}

\noindent where the inequality stems from requiring the minimum of the Lagrangian to be finite. 

The argument for Slater's condition to be satisfied is similar to the one given in the previous section for complexity-volume. Let us cover the whole spacetime with tubes of unit transverse area. We can find the point of the tube with the maximum value of $|F_2(x)|$, and fill the tube with a greater (but finite) number of threads. Hence the constraint at every point is strictly satisfied. This implies strong duality which means that the solution of \eqref{C=any opt program} and \eqref{C=any dual program} must coincide. 

In this case a feasible  $\lambda(x)$ is a delta function supported on a Cauchy slice $\Sigma$ with a fixed boundary. 
 This gives a lower bound on the solution of \eqref{C=any dual program}
\begin{equation}
    \max \int_\mathcal{M} d^{d+1}x \sqrt{-g} \lambda(x) |F_2(x)| \geq \max \int_{\Sigma}d^d \sigma \sqrt{h} |F_2(x)|.
\end{equation}
We can find the upper bound by using Theorem \ref{th: theorem} as follows
\begin{equation}
    \max \int_\mathcal{M} d^{d+1}x \sqrt{-g} \lambda(x) |F_2(x)| \leq \int_{\mathcal{M}}d^{d+1}x \sqrt{-g} |d\psi| |F_2(x)|,
\end{equation}
where $\psi : \mathcal{M} \rightarrow [-1/2,1/2] \  s.t. \ \psi|_{\mathcal{A}^c}=-1/2, \psi|_{\mathcal{A}}=1/2$ and $|d\psi| \geq \lambda(x)$. Using the coarea formula
\begin{equation}
    \int_{\mathcal{M}}d^{d+1}x \sqrt{-g} |d\psi| |F_2(x)| = \int_{-1/2}^{1/2} dc \int_{\Sigma(c)} d^d\sigma \sqrt{h} |F_2(x)| \leq \max \int_{\Sigma}d^d\sigma\sqrt{h} |F_2(x)|
\end{equation}
where $\Sigma(c)$ are level sets of $\psi$ homologous to $\mathcal{A}$.

Since the upper and lower limits of \eqref{C=any dual program} are same, it is the optimal solution. 
\begin{equation}\label{eq:intF2Sigma}
    \min \int_{\mathcal{P}}d\mu=\max \int_{\Sigma} d^d\sigma \sqrt{h}|F_2(x)|.
\end{equation}

Strong duality states that the maximum of the integral of $|F_2|$ on $\Sigma$ is equal to the minimum measure of the set of all threads with density constraints. 
Clearly, we can interpret this program as searching for the surface of maximum volume with respect to a re-scaled metric $\bar{g}_{\mu\nu}$ such that $\sqrt{-\bar{g}}=|F_2|\sqrt{-g}$ \footnote{The fact that the function $F_2(x)$ can be equal to zero at a certain region of spacetime might worry the reader. However, the nature of the maximization program will make the optimal surface not to intersect these regions. There are examples in which this sentence might not be true. For instance, suppose that inside the bulk domain of dependence of any Cauchy slice anchored to $\sigma_A$ there is a closed shell where $F_2=0$ while outside $F_2\neq 0$ but $F_2\ll 1$ and inside $F_2(x)\gg 1$. The optimal Cauchy surface will clearly cross the shell although $F_2(x)=0$ over it. }.
Notice that the program calculates the right-hand side of \eqref{eq:intF2Sigma} but does not provide the location of $\tilde{\Sigma}$. To obtain it, we need to resort to complementary slackness. It states that at every feasible point:
\begin{equation}
    \tilde{\lambda}(x)(F_2(x)-\tilde{\rho}(x))=0, \quad \forall x\in \mathcal{M}
\end{equation}
Therefore, for points at which $\lambda(x)\neq 0$ the inequality constraint gets saturated i.e. $\rho(x) = |F_2(x)|$. This allows us to reconstruct the surface $\tilde{\Sigma}$.
Having found the optimal surface $\tilde{\Sigma}$, the next step is evaluating the observable $O_{F_1,F_2}$. 

\subsubsection{Case I : $F_1(x) = F_2(x)$}
In the formulation of \cite{Belin:2021bga,Belin:2022xmt}, when $F_1(x) = F_2(x)$, finding $\tilde{\Sigma}$ and obtaining the value of the observable $\mathcal{O}_{F_1=F_2}$ is done in one step by maximizing \eqref{eq:gc_codim1}. In threads language, the situation is similar, except for a subtlety encountered if  $F_2$ is negative in some region. Recall that in section \ref{sec: codimension 1} we argued that to find $\tilde{\Sigma}$ we take $|F_2(x)|$ in equation \eqref{C=any opt program}. But to evaluate the observable we should take the sign of $F_2$ into account to reproduce the expected results \eqref{eq:gc_codim1}. We will see that this requires including negative measure threads when evaluating $\mathcal{O}_{F_1=F_2}$.

Consider the optimal configuration and the points $x$ where each thread crosses $\tilde{\Sigma},$ that is $x = p \cap \tilde{\Sigma}.$ Each of the threads that goes through a point where $F_2<0$ should contribute negatively to the observable. Thus, we must include positive and negative measure threads, 
\begin{equation}
\begin{split}
    & \forall x \in \Sigma \quad \textup{s.t.} \quad F_2(x) < 0, \quad \mu^{\prime}(p) = - \mu(p),\quad x = p \cap \tilde{\Sigma}, \\
    & \forall x \in \Sigma \quad \textup{s.t.} \quad F_2(x) \geq 0, \quad \mu^{\prime}(p) = \mu(p), \quad x = p \cap \tilde{\Sigma}, \\
\end{split}
\end{equation}
where $\mu(p)>0.$ The integral over  $\mu'(p),$ then gives the required generalized complexity,
\begin{equation}
   \mathcal{O}_{F_1=F_2}= \int_{\mathcal{P}} d\mu^{\prime} = \int_{\tilde{\Sigma}}d^d \sigma \sqrt{h} F_2(x).
\end{equation}

\subsubsection{Case II : $F_1(x) \neq F_2(x)$}\label{section: f_1 different f_2}
In section \ref{sec: codimension 1} we argued that  $\rho(x)=|F_2(x)|,\ \forall x\in \Sigma$. Thus, 
\begin{equation}
    \int_\tilde{\Sigma} d^d \sigma \sqrt{h}\rho(x)=\int_{\tilde{\Sigma}} d^d \sigma \sqrt{h}|F_2(x)|.
\end{equation}
It is convenient to discretize this integral in order to provide a cleaner geometrical interpretation. Suppose that threads intersect the surface $\tilde{\Sigma}$ at $m$ different points. Notice that $m\leq \#\textup{threads}$ since two or more threads can cross $\Sigma$ at the same point. We will partition the maximal surface into a set of smaller areas $A_i,\ i=1,...,m$ such that, 
$\int_{A_i}d^d\sigma\sqrt{h}|F_2(\sigma)|=n_i$
where $n_i$ is the number of threads going through $A_i$. 
Due to linearity of the integral,
\begin{equation}
 \int_{\tilde{\Sigma}} d^d \sigma \sqrt{h}|F_2(\sigma)|=\displaystyle \sum_{i} \int_{A_i} d^d \sigma \sqrt{h}|F_2(\sigma)|=\sum_{i} \mu(p_i),
\end{equation}
If every single $A_i$ has a small enough area\footnote{In a  holographic CFT the number of degrees of freedom, and therefore the number of threads, scales  like $N^2.$ Thus, each $A_i$ has a small area and the approximation \eqref{eq:approximation} holds .
}
\begin{equation}
    \int_{A_{i}} d^d\sigma \sqrt{h} |F_2(\sigma)|\approx \textup{ Area}(A_i) |F_2(x_i)|,
    \label{eq:approximation}
\end{equation}
where $x_i$ is the intersection of the thread $i$ with $\tilde{\Sigma}$. Thus, for threads appearing in the optimal configuration we have
\begin{equation}\label{eq: approx non intersecting}
    \mu(p_i)=\frac{1}{n_i}\textup{Area}(A_i) |F_2(x_i)|,
\end{equation}
where $A_i$ is the area enclosing $p_i$. In virtue of \eqref{eq:approximation} and \eqref{eq: approx non intersecting}, the observable  $\mathcal{O}_{F_1,F_2}$ can be obtained by rescaling the measure of each thread by $F_1(x_i)/|F_2(x)|$. Indeed,
\begin{equation}
   \begin{split}
   \mathcal{O}_{F_1,F_2}&= \displaystyle\sum_i \mu(p_i)\frac{F_1(x_i)}{|F_2(x_i)|}\approx \sum_{i=1}^{\# \textup{ threads}}  \frac{1}{n_i}\textup{Area}(A_i)F_1(x_i)\approx\sum_{i=1}^{\# \textup{ areas}}  \textup{Area}(A_i)F_1(x_i)\\
   &\approx \int_{\tilde{\Sigma}} d^d x \sqrt{h}F_1(x).
    \end{split}
\end{equation}
A couple of remarks are in order. First, note that the original measure  $\mu$ is strictly positive since it is the solution of maximizing $|F_2(x)|$. However, the multiplicative factor takes the sign of the function $F_1(x)$, and thus, the final observable also depends on the sign of the latter function. Second, $\mu$,  used to determine $\tilde{\Sigma}$,  has by construction a value equal to $0$ or $1$ on each thread. However, the rescaling makes the image of $\mu'$ to be the real numbers and not just the finite set $\{0,1\}.$ 


\subsection{Codimension-zero observables}
Once we have analysed in detail the thread formalism that allows the computation of generalized complexities, one can go one step further and not restrict just to programs that in the end yield a solution localized at a codimension-one surface, but it can be also extended immediately to codimension-zero surfaces, as shown in \cite{Belin:2022xmt}. The underlying idea followed in the latter paper is an extension of the functional \eqref{eq:gc_codim1} such that it is also evaluated on a codimension-zero region $\mathcal{V}$ in the bulk, as stated in \eqref{eq:gc_functional} that we recall here

\begin{equation}
    \begin{split}
        W_{G_2,F_{2,\pm}}(\mathcal{V})=&\int_{\Sigma_+} d^d\sigma \sqrt{h}F_{2,+}(g_{\mu\nu};X^\mu_+)+\int_{\Sigma_-} d^d\sigma \sqrt{h}F_{2,-}(g_{\mu\nu};X^\mu_-)\\
        &+\frac{1}{L}\int_{\mathcal{V}}d^{d+1}x\sqrt{-g}G_2(g_{\mu\nu}),
    \end{split}
\end{equation}

\noindent where $\Sigma_\pm$ are two codimension-one surfaces (with volume form $\sqrt{h}d^d\sigma$), $\mathcal{V}$ the region enclosed by these two surfaces and $X^\mu_\pm$ are the embeddings of $\Sigma_\pm$. Finding the previous surfaces reduces to an extremization of this functional under changes of $X^\mu_\pm$

\begin{equation}
    \delta_{X_{\pm}}\left[W_{G_2,F_{2,\pm}}(\tilde{\mathcal{V}})\right]=0.
\end{equation}

In general, the extremization of the functions is hindered by the appearance of the volume (or codimension-zero) term. One simplifying situation occurs when the function $G_2$ admits a primitive function $\tilde{G}_2$, in such a way that Stokes' theorem can be applied, the functional may be recast as

\begin{equation}
    \begin{split}
        W_{G_2,F_{2,\pm}}(\mathcal{V})=&\int_{\Sigma_+} d^d\sigma \sqrt{h}[F_{2,+}(g_{\mu\nu};X^\mu_+)+\tilde{G}_2(g_{\mu\nu};X^\mu_+)]\\
        &+\int_{\Sigma_-} d^d\sigma \sqrt{h}[F_{2,-}(g_{\mu\nu};X^\mu_-)-\tilde{G}_2(g_{\mu\nu};X^\mu_-)],
    \end{split}
\end{equation}  

and both $\Sigma_+$ and $\Sigma_-$ can be obtained from two different optimisation programs

\begin{equation}
    \begin{split}
        \delta_{X_+}\left(\int_{\tilde{\Sigma}_+} d^d\sigma \sqrt{h}[F_{2,+}(g_{\mu\nu};X^\mu_+)+\tilde{G}_2(g_{\mu\nu};X^\mu_+)]\right)=0,\\
        \delta_{X_-}\left(\int_{\tilde{\Sigma}_-} d^d\sigma \sqrt{h}[F_{2,-}(g_{\mu\nu};X^\mu_-)-\tilde{G}_2(g_{\mu\nu};X^\mu_-)]\right)=0.\\ 
    \end{split}
    \label{eq: codim-0 extremisation}
\end{equation}

Once $\Sigma_\pm$ are determined, the generalized complexity is given by

\begin{equation}
    \begin{split}
        &\mathcal{O}[G_1,F_{1,\pm},\mathcal{V}_{G_2,F_{2,\pm}}](\Sigma_{\textup{CFT}})=\frac{1}{G_N L}\int_{\tilde{\Sigma}_+[G_2,F_{2,+}]} d^d\sigma \sqrt{h}F_{1,+}(g_{\mu\nu};X_+^\mu)\\
        &+\frac{1}{G_N L}\int_{\tilde{\Sigma}_-[G_2,F_{2,-}]} d^d\sigma \sqrt{h}F_{1,-}(g_{\mu\nu};X_-^\mu)+\frac{1}{G_NL^2}\int_{\tilde{\mathcal{V}}[G_2,F_{2,\pm}]}d^{d+1}x \sqrt{-g}G_1(g_{\mu\nu}).
    \end{split}
\end{equation}

Notice that the extremizations in \eqref{eq: codim-0 extremisation} resemble much like that the one solved in Section \ref{sec: codimension 1}, fact that leads us to think that one can find these surfaces repeating exactly the same steps as before. First, we propose the following program to compute $\Sigma_+$

\begin{equation}\label{eq: min threads codim 0}
    \text{min} \quad \mu_+(P), \quad \textup{s.t.} \quad \forall x \in \mathcal{M} \quad \rho_+(x) \geq |F_{2,+}(x)+\tilde{G}_2(x)|.
\end{equation}

Its dual program is 
\begin{equation}
    \text{max} \int_{\mathcal{M}} d^dx \sqrt{-g} \lambda_+(x) |F_{2,+}(x)+\tilde{G}_2(x)|\quad \textup{s.t.} \quad \int_p ds\lambda_+(x)\leq 1\ \forall p\in\mathcal{P}_+. \label{max program}
\end{equation}

where $\mathcal{P}_+$ is the set of threads that go from $A^c$ to $A$. 
Although both $\mathcal{P}_+$ and $\mathcal{P}_-$ are exactly the same set, we put the labels $\pm$ to differentiate the families of threads that are employed to obtain $\tilde{\Sigma}_+$ from the homologous ones used to obtain $\tilde{\Sigma}_-$. Because of the same argument as before, the solution to this program is the integral of $|F_{2,+}(x)+\tilde{G}_2(x)|$ over $\tilde{\Sigma}_+$. 
 Similarly, one can obtain $\tilde{\Sigma}_-$ with an equivalent program using a measure $\mu_-(p)$. In this case, the primal program differs slightly from \eqref{eq: min threads codim 0} due to a change in the density constraint introduced to match with the second line in \eqref{eq: codim-0 extremisation}

\begin{equation}\label{eq: min threads codim 0 two}
    \text{min} \quad \mu_-(P),\quad \textup{s.t.} \quad \forall x \in \mathcal{M} \quad \rho_-(x) \geq |F_{2,-}(x)-\tilde{G}_2(x)|, 
\end{equation}

but the procedure is exactly the same which gives a dual as:
\begin{equation}
    \text{max} \int_{\mathcal{M}} d^dx \sqrt{-g} \lambda_-(x) |F_{2,-}(x)-\tilde{G}_2(x)|,\quad \textup{s.t.} \quad \int_p ds\lambda_-(x)\leq 1\ \forall p\in\mathcal{P}_-. \label{max program-2}
\end{equation}

It is a noteworthy fact that in this case we have two families of threads that do not interact among themselves, one associated to $\tilde{\Sigma}_+$ and another one to $\tilde{\Sigma}_-$ (see Fig \ref{two different measures}), each of them associated to the measures $\mu_+$ and $\mu_-$ respectively. Again, in order to reconstruct the surfaces from the optimal thread configuration, it is only necessary to search those points at which $\rho_+(x)$ or $\rho_-(x)$ saturate the constraints, since this set of points constitute the support of the Lagrange multipliers $\lambda_+(x)$ and $\lambda_-(x)$. 

\begin{figure}
    \centering
    \includegraphics[scale=0.3]{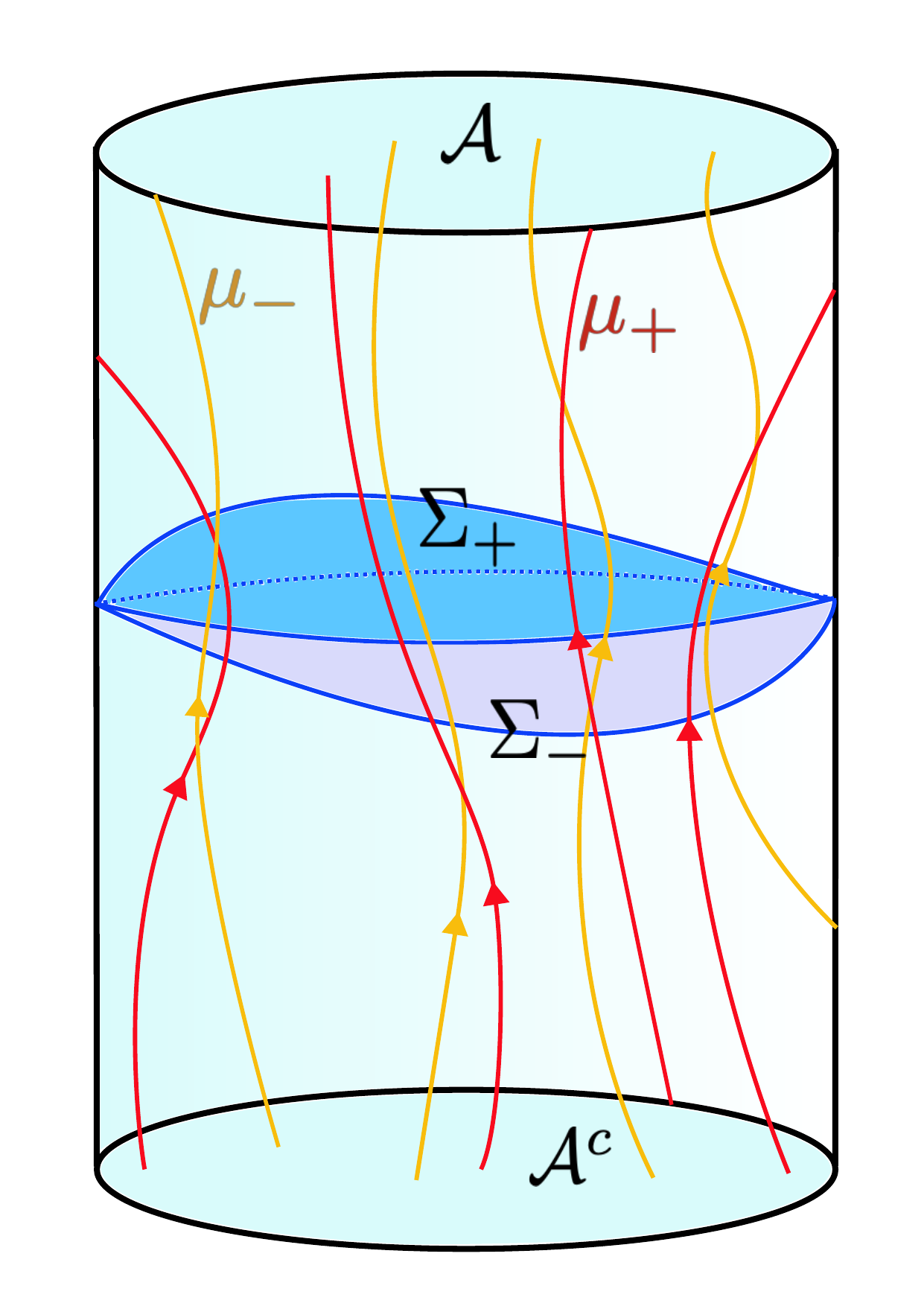}
    \caption{Two different measures $\mu_+$ and $\mu_-$ at each $x \in \mathcal{M}$ }
    \label{two different measures}
\end{figure}

As done in the previous sections, the optimization program would give the observable in the case of $F_{1,\pm}(x)=F_{2,\pm}(x), G_1(x)=G_2(x)$ (non-negative). In contrast, when $F_{1,\pm}(x)\neq F_{2,\pm}(x), G_1(x) \neq G_2(x)$, we have to perform a rescaling of the measures analogous to that shown in Section \ref{section: f_1 different f_2}. More concretely, complexity would be given by

\begin{equation}\label{eq: observable codim-0}
\begin{split}
    \mathcal{O}_{F_{1,\pm},F_{2,\pm},G_1,G_2} &= \displaystyle\sum_{i=1}^{\# \mathcal{P}_+} \mu_+(p_i)\frac{F_{1,+}(x_i)+ \Tilde{G}_1(x_i)}{|F_{2,+}(x_i) + \Tilde{G}_2(x_i)|}+\sum_{i=1}^{\# \mathcal{P}_-} \mu_-(p_i)\frac{F_{1,-}(x_i) - \Tilde{G}_1(x_i)}{|F_{2,-}(x_i)-\Tilde{G}_2(x_i)|} \\
    & =\sum_{i=1}^{\# \mathcal{P}_+} \mu'_+(p_i)+\sum_{i=1}^{\# \mathcal{P}_-} \mu'_-(p_i),
\end{split}    
\end{equation}

where in the last equality, we have made the redefinition $\mu'_\pm=\mu_\pm\frac{F_{1,\pm}(x_i) \pm \Tilde{G}_1(x_i)}{|F_{2,\pm}(x_i) \pm \Tilde{G}_2(x_i)|}$. We have again made the assumption here that the function $G_1$ has a primitive $\Tilde{G}_2$ so that the integral over a codimension-zero region can be reduced to an integral over the codimenion-1 surfaces. In the next section we discuss the cases where this is not possible.

\subsubsection{CV 2.0 and CA}

In general, the computation of observables evaluated on codimension-zero surfaces is hard to execute. The main reason is found in the fact that our formalism is highly based upon the assumption $G_2(x)\sqrt{-g}\dd^{d+1} x=\dd(\tilde{G}_2(x)\sqrt{h}\dd^d\sigma)$, where $\sqrt{-g}\dd^{d+1}x$ is the volume form of the spacetime while $\sqrt{h}\dd^{d}\sigma_\pm$ is corresponding to the surface $\tilde{\Sigma}_\pm$. Even in the simplest cases, as for instance, when $G_2(x)$ is constant, finding $\tilde{G}_2(x)$ is a highly non-trivial exercise whose solution is not formally guaranteed.

As in \cite{Belin:2022xmt} we will assume the existence of a $\tilde{G}_2(x)$ such that the associated $G_2(x)$ is constant and take $F_{1\pm}=F_{2\pm}=\alpha_{\pm}\rightarrow 0$. In \cite{Belin:2022xmt} it is shown the solutions to this problem are two surfaces $\tilde{\Sigma}_\pm$ such that $\tilde{\Sigma}_+\cup\tilde{\Sigma}_-$ is the boundary of the Wheeler-de Witt patch. 

Once the surfaces are known, the observable $\mathcal{O}_{F_{1\pm},F_{2\pm},G_1,G_2}$ would be determined after integrating the function $G_1$ in the region delimited by these two surfaces and $F_{1+}$ and $F_{1-}$ over $\tilde{\Sigma}_\pm$ respectively. The contribution of the latter surfaces is determined following the same approach as in \ref{section: f_1 different f_2}. Nonetheless, the piece evaluated on the codimension zero surface cannot be extracted from the threads already presented and finding a new way of perform this calculation becomes a central objective. Note that this is conceptually different from what has been previously done in the literature since  until now the structure of the threads away from the optimizing surface played no role in the observables. However, in the case of codimension-zero observables, the threads themselves are contained in the region. Here we will take a pragmatic approach; We  motivate how one can perform the calculation of the observable and leave questions about the interpretation and the exploration of other possible methods for future work.   

First note that the trace of the extrinsic curvature tensor can be employed to parametrize the set of hypersurfaces of constant $K$ homologous to the boundary region $\sigma_A$ \cite{MARSDEN1980109}. Let us select a finite but large set of $N+1$ of them. This set will be denoted with $D={\Sigma_0,\Sigma_1,...,\Sigma_N}$ where $\Sigma_0=\tilde{\Sigma}_-$, $\Sigma_N=\tilde{\Sigma}_+$. In particular, we want two contiguous surfaces to be very close one to the other. We look at the intersection points between threads and $\Sigma_0$ and construct a line perpendicular to this surface with constant tangent vector. We illustrate this construction in Fig \ref{fig:sewing}. In case several threads intersect at the same point in $\tilde{\Sigma}_+$, we just build one line. These lines intersect $\Sigma_1$ at another set of points. We can repeat the same procedure until the space between $\tilde{\Sigma}_+$ and $\tilde{\Sigma}_-$ is sewed with a set of threads. Then, we find a tessellation of each hypersurface in $D$. 
\begin{figure}
    \centering
    \includegraphics[scale=0.3]{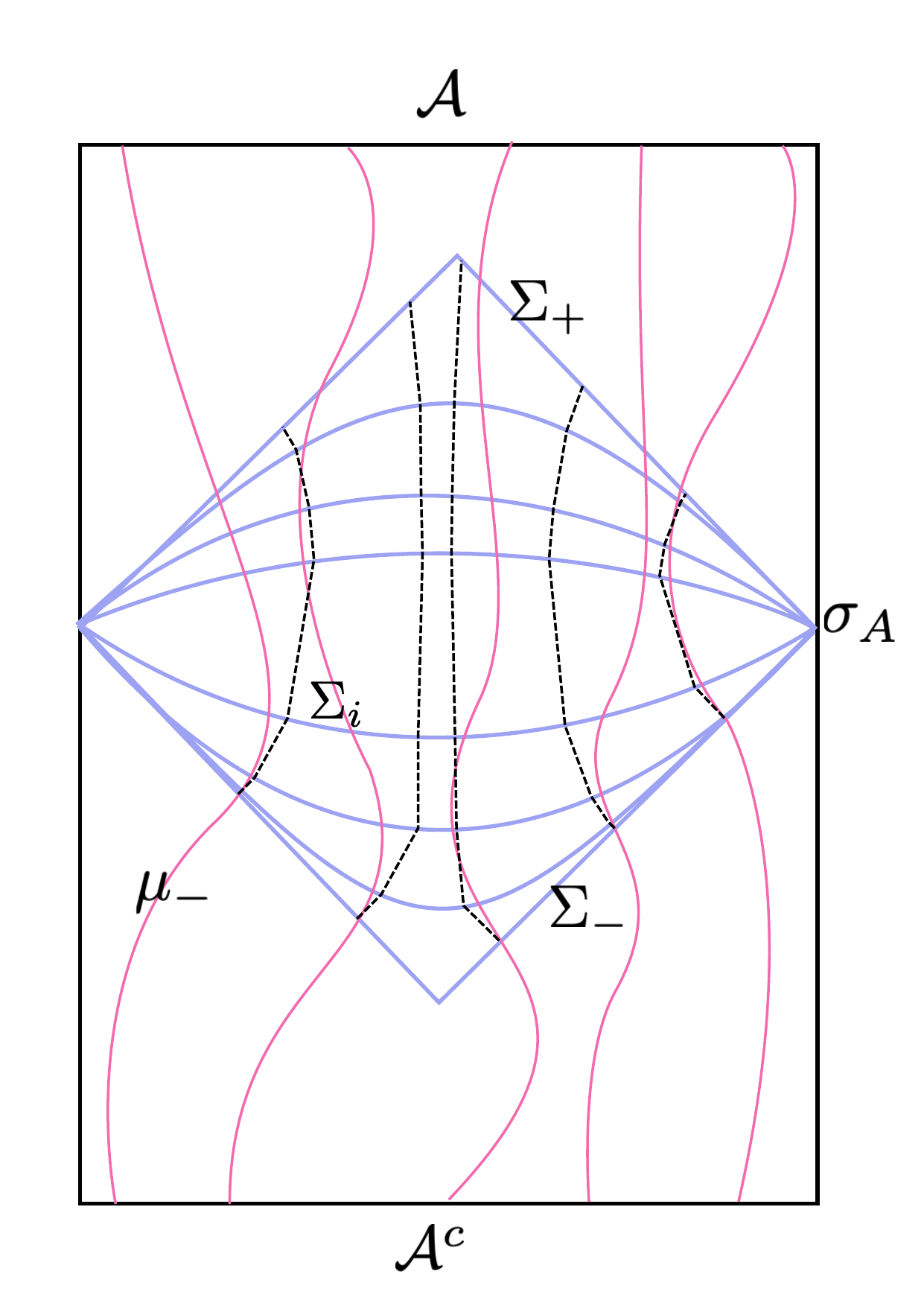}
    \caption{Foliating spacetime with constant curvature slices and sewing them with threads}
    \label{fig:sewing}
\end{figure}

We will label each area $A_{ij}$ with two indices, where the first index identifies the hypersurface and the second one labels the infinitesimal area element in it. If there are $n$ threads, the observable would approximately be given by

\begin{equation}
    \mathcal{O}_{F_{1\pm},F_{2\pm},G_1,G_2}=\int_\mathcal{V}d^{d+1}xG_1(x)\approx \displaystyle \sum_{i=0}^{N-1}\sum_{j=1}^{n} A_{ij}\int_{s_{i}}^{s_{i+1}}ds_j G_1(x)
\end{equation}

where $s_j$ is employed to parametrize the thread $j$ and $s_i$ is the value of this parameter when the thread intersects the hypersurface $\Sigma_i$.


To see the specific cases of interest we set $G_1(x) = 1, \; \forall x \in \mathcal{M}$ and see that the observable becomes the spacetime volume of the Wheeler-deWitt patch, thus giving us the CV 2.0 result. The less trivial case is found by setting $G_2(x)$ to be the Lagrangian density at each point $\mathcal{L}(x)$. Then the observable becomes the action of the Wheeler-deWitt patch as required by Complexity Action.

Note that our construction picks the points from one set of threads which saturate the surface $\tilde{\Sigma}_-$ and builds up infinitesimal area and volume elements from that. 
Alternatively, we could have started from the set of threads that saturates $\tilde{\Sigma}_+$ and built this structure in the past direction. For discrete set of threads, these two methods will give different solutions. However as we go to the continuum limit, the answers will converge to the same value.

\section{Conclusions}

In \cite{Belin:2021bga,Belin:2022xmt} the authors proposed an infinite family of generalized holographic complexities. The existence of  a large number of  gravitational observables that reproduce  the behavior necessary to be  dual of complexity, underscores the fact that our holographic understanding of quantum complexity still needs to be developed much further.

In the past, bit threads have  provided an intuitive understanding of holographic entanglement. Therefore, it is natural to ask if Lorentzian threads can provide a formulation of generalized complexities that would shed some light on their interpretation. In that spirit, in this paper we initiate the study of  generalized complexities in the language of Lorentzian threads.  For the standard maximal volume complexity, CV, the work of  \cite{Pedraza:2021fgp} proposes the interpretation of Lorentzian threads as ``gatelines" where each thread going from the complementary past of a boundary subregion to the subregion itself represents one gate of unitary evolution.  We find that to understand generalized complexitites it useful to reformulate the problem using the language of distributions and measures \cite{Headrick:2023} instead of flows as done in \cite{Pedraza:2021fgp}. We encountered subtleties associated with the fact that the generalized observables are defined by functions that can be positive or negative in different regions. We find  it  necessary to introduce threads with negative contributions or negative measures. With all these ingredients, we find an optimization program to calculate the infinite family of codimension-one, or generalized volume, observables.  We also propose a way of tackling the codimension-zero observables, although much work needs to be done in this case. Our works opens several directions of research: 

\begin{itemize}
    \item \emph{ Thread classes.} Can we identify subclasses of threads distributions or equivalently, complexity observables  by demanding they satisfy other properties?  For example, we can ask if there are particular thread configurations that encode Einstein's equations \cite{Pedraza:2022dqi,Carrasco:2023fcj, EJARV}. Or investigate issues of unitarity in time asymmetric situations as was done in \cite{Aguilar-Gutierrez:2023zqm}.
    
    \item \emph{ Explicit flow constructions.}  The  methods developed in  \cite{Pedraza:2021fgp} could be useful to find explicit flow constructions for different observables. Furthermore, our approach is agnostic regarding the existence of a horizon. Thus,  these constructions would allow us to ask questions regarding the interior of the black holes in terms of threads in the spirit of \cite{J_rstad_2023}.
    
    \item \emph{Interpretation of threads beyond CV.} It has been proposed \cite{Pedraza:2021fgp} that for the maximal volume prescription, CV,  the threads represent gatelines. In the generalized complexity framework, CV corresponds to $|F_1|=|F_2|=1$. The interpretation of threads in a more general case is an open question. A  concrete and necessary step could be understanding the  interpretation of negative measures needed to  formulate generalized complexities. 
   
\end{itemize}

\acknowledgments 

We are indebted to Jos\'e Barb\'on, Jes\'us Gamb\'in, Juan Pedraza and Andy Svesko for enlightening and fun  discussions. Special thanks to Juan Pedraza for comments on the manuscript.  EC thanks the Instituto de F\'isica Te\'orica UAM-CSIC (IFT UAM-CSIC) 
, for hospitality during the last stages of this project. EC and RC thank the organizers and participants of the "Complexity in Field Theory and Gravity" workshop at IFT UAM-CSIC for a stimulating conference. 
The work of RC is supported through the grants CEX2020-001007-S and PID2021-123017NB-I00, funded by MCIN/AEI/10.13039/501100011033 and by ERDF A way of making Europe. EC and VP are supported by the National
Science Foundation under grants No. PHY-2112725 and No. PHY–2210562.
\begin{appendices}

\section{ Proofs of Theorem 1 and Theorem 2}\label{app:1} 
In this Appendix we will prove the theorems used in section \ref{sec: CV -- using measures} and \ref{sec: codimension 1}.

\thm*

\begin{proof}
We will follow a similar derivation as in \cite{Headrick:2023}. 
Let us assume that 1) holds. Therefore one finds:

\begin{equation}
    1=\psi|_A^B=\int_p d\psi =\int_p ds \frac{d\psi}{ds}=\int_p ds \frac{d y^\mu}{ds}\partial_\mu \psi=\int_p ds \frac{d y^\mu}{ds}\partial^\nu \psi g_{\mu\nu}
\end{equation}

where $y^\mu$ is the trajectory of the thread. Before going one step further, let us take into account that $d\psi$ is timelike and future directed. The vector field dual to this one ($\partial^\mu \psi\partial_\mu$) will be also timelike but past directed. As the thread is directed to the future, the vector field $\frac{dy^\mu}{ds}\partial_\mu$ is future directed. It is easy to show that, under these circumstances, $g_{\mu\nu}u^\mu v^\nu\geq |u||v|$ for $u$ timelike, future directed and $v$ timelike, past directed. Hence 

\begin{equation}
    1\geq \int_p ds\left|\frac{d y^\mu}{ds}\right||d\psi| 
\end{equation}

Provided that, by definition of proper distance, $\left|\frac{dy^\mu}{ds}\right|=1$, one finds

\begin{equation}
    1\geq \int_p ds|d\psi| \geq \int_p ds\lambda
\end{equation}

In other words, $\int_p ds \lambda \leq 1$, reaching in this way 2).

Now, we have to show the other implication. Let us assume 2) and rewrite the integral over the thread as $\int_p ds \lambda=\int_p dt |-\dot{x}|\lambda$. In this expression $-\dot{x}$ is the covector associated the tangent to the curve $p$. The reason why we add a minus sign is that the tangent vector is future directed. We define

\begin{equation}
    \psi_-(y):=\sup_{\substack{p\ \textup{timelike}\\\textup{from } A\ \textup{to}\ y}}\int_p dt |-\dot{x}|\lambda,\quad  \psi_+(y):=\sup_{\substack{p\ \textup{timelike}\\\textup{from } y\ \textup{to}\ B}}\int_p dt |-\dot{x}|\lambda. 
\end{equation}

where the supremum is over all the curves going from $A$ to $y$ and from $y$ to $B$ respectively. By assumption

\begin{equation}
    \psi_-(y)+\psi_+\leq 1
\end{equation}

and also:

\begin{equation}
    \lim_{y\rightarrow A} \psi_-(y)=0,\ \ \  \lim_{y\rightarrow B} \psi_+(y)=0
\end{equation}

Let us now calculate the gradient of $\phi_-$. Notice that the integrand is a differentiable function of $\dot{x}$ since it corresponds to timelike future directed curves. To be able to apply the Hamilton-Jacobi formula\footnote{Hamilton-Jacobi formula assures that the variation of the on-shell action after a modification in the final position is given by the canonical momentum in its final value.} \cite{Headrick:2023}, we have to remark that the optimal solution may have a lightlike tangent vector. In order to overcome this problem, we will consider the domain to be the whole tangent space and implement the constraint by defining the integrand to equal $-\infty$ whenever the velocity is outside the future light cone.

\begin{displaymath}
\psi_-(y)=\sup_{\substack{q\ \textup{timelike}\\\textup{from } A\ \textup{to}\ y}}\int_q dt\left\{ \begin{array}{ll}
|-\dot{x}|\lambda & -\dot{x}\in \mathfrak{i}^+\\
-\infty &  \textup{otherwise} \\
\end{array}
\right.,
\end{displaymath}

\begin{displaymath}
\psi_+(y)=\sup_{\substack{q\ \textup{timelike}\\\textup{from } y\ \textup{to}\ B}}\int_q dt\left\{ \begin{array}{ll}
|-\dot{x}|\lambda & -\dot{x}\in \mathfrak{i}^+\\
-\infty &  \textup{otherwise} \\
\end{array}
\right. .
\end{displaymath}

Here, $\mathfrak{i}^+$ is the set of non-spacelike, future directed one-forms. If $-\dot{x}$ is timelike (which is the case of interest for us), we find that $\pi_{\pm\mu}=\partial_{\dot{x}^\mu}(|-\dot{x}|\lambda)=-\lambda \dot{x}_{\mu}/|-\dot{x}|$. Consequently

\begin{equation}
    |d\psi_\pm|^2=\frac{\lambda^2 \dot{x}_\mu\dot{x}^\mu}{|-\dot{x}|^2}\geq \lambda^2
    \label{eq: dpsi geq lambda}
\end{equation}

Therefore $|d\psi_\pm|\geq \lambda$. Moreover, setting 

\begin{equation}
    \psi(y)=\frac{\psi_-(y)-\psi_+(y)}{2(\psi_-(y)+\psi_+(y))},
\end{equation}

one may immediately realize that $\psi|_A=-1/2$ and $\psi_B=1/2$ and, taking into account \eqref{eq: dpsi geq lambda}

\begin{equation}
    |d\psi|\geq\frac{\psi_+(y)}{(\psi_-(y)+\psi_+(y))^2}|d\psi_-|+\frac{\psi_-(y)}{(\psi_-(y)+\psi_+(y))^2}|d\psi_+|\geq \frac{1}{\psi_+(y)+\psi_-(y)}\lambda\geq \lambda.
\end{equation}

Proving in this way that 2) implies 1).
\end{proof}

We now proceed with Theorem \ref{th: theorem2} and that we recall here 

\begin{theorem}\label{th: theorem2}
    Let $\mathcal{M}$ be a $d+1$-dimensional Lorentzian manifold, $\sigma$ a Cauchy slice living on the boundary of $\mathcal{M}$ and $\Sigma$ a codimension-one Cauchy surface whose boundary coincides with $\sigma$. The extremization of the functional

    \begin{equation}\label{eq: extremisation}
        W=\int_{\Sigma} d^{d}x\sqrt{h} F_2,
    \end{equation}
    under changes of the integration manifold $\Sigma$ provides the same extrema as the extremization of the functional

    \begin{equation}
       W'= \int_{\Sigma} d^{d}x\sqrt{h} |F_2|.
    \end{equation}
\end{theorem}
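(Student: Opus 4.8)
The plan is to argue that extremizing $W = \int_\Sigma d^dx\sqrt{h}\,F_2$ and $W' = \int_\Sigma d^dx\sqrt{h}\,|F_2|$ over embeddings $X^\mu$ produce the same family of extremal surfaces, by examining the Euler--Lagrange equations locally and showing they agree wherever $F_2 \neq 0$, and that the set $\{F_2 = 0\}$ is generically not met by a nondegenerate extremal surface. First I would write the variation of $W$ in the standard way: treating $F_2(g_{\mu\nu};X^\mu)$ as a scalar built from the induced metric and the embedding, the condition $\delta_X W = 0$ yields a geometric equation of the schematic form $F_2\,K + (\text{terms involving } \partial F_2) = 0$, where $K$ is the trace of the extrinsic curvature of $\Sigma$ (this is exactly the kind of equation appearing in \cite{Belin:2021bga}). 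The key observation is that on any open patch of $\Sigma$ where $F_2 > 0$ we have $|F_2| = F_2$ identically, so the two functionals have literally the same integrand and hence the same Euler--Lagrange equation; on any patch where $F_2 < 0$ we have $|F_2| = -F_2$, and since the overall sign of a functional does not affect its stationarity condition ($\delta(-W) = 0 \iff \delta W = 0$), again the extremality equations coincide pointwise.

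The next step is to handle the locus $Z = \{x \in \mathcal{M} : F_2(x) = 0\}$, which is codimension-one in $\mathcal{M}$ (generically) and where $|F_2|$ fails to be differentiable. Here I would argue that an extremal surface $\tilde\Sigma$ of $W$ that is to be selected (the one maximizing complexity, as stipulated in the main text) will not have an open portion lying inside $Z$: if it did, the induced integrand would vanish on that portion, and one could freely deform that piece — in particular push it toward a null configuration — without changing $W$, which contradicts $\tilde\Sigma$ being a genuine (nondegenerate, maximal) extremum. This is precisely the phenomenon flagged in the footnote of Section~\ref{sec: codimension 1}: pathological shell configurations aside, the optimal surface avoids $Z$ except on a measure-zero set, and a transversal intersection with $Z$ (where $\tilde\Sigma$ merely crosses the zero locus) contributes nothing to either variation. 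Hence the stationarity conditions for $W$ and $W'$ agree on all of $\tilde\Sigma$ up to a set of measure zero, which is enough to conclude they define the same extremal embeddings.

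Finally I would assemble these pieces: given $\tilde\Sigma$ extremal for $W$, decompose it into the regions $F_2 > 0$, $F_2 < 0$, and the crossing locus; on each of the first two the Euler--Lagrange equation for $W'$ holds because the integrands agree up to sign; the crossing locus is negligible; therefore $\delta_X W'[\tilde\Sigma] = 0$. The converse is identical, reading the argument backwards. I expect the main obstacle to be making the statement about the zero locus $Z$ rigorous: one genuinely needs a hypothesis ruling out the ``thick shell'' counterexamples (a region where $F_2 = 0$ is forced to be crossed), so the honest version of the theorem is a statement modulo such degenerate configurations, and I would state that caveat explicitly rather than claim an unconditional equivalence. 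The rest — the variational computation and the sign-flip invariance of stationarity — is routine.
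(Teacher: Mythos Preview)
Your proposal is correct and takes essentially the same approach as the paper: derive the Euler--Lagrange equations for $W$ (the paper does this explicitly via the normal/tangential decomposition $\delta X^\mu = \delta a\, n^\mu + \delta b^\mu$, obtaining $K F_2 + n^\mu \partial_\mu F_2 = 0$ together with a tangential condition), and then observe that on each sign-definite patch the equations for $W'$ differ only by an overall sign and hence coincide as stationarity conditions. If anything, your treatment of the zero locus $Z = \{F_2 = 0\}$ is more careful than the paper's, which simply treats the cases $F_2 \geq 0$ and $F_2 \leq 0$ pointwise without addressing the non-differentiability of $|F_2|$ there; your caveat about degenerate ``thick shell'' configurations is an honest improvement, and note also that the paper's written proof only spells out the direction ``extremizer of $W \Rightarrow$ extremizer of $W'$'', whereas you correctly remark that the converse runs identically.
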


\begin{proof}
We first remark that the variation of the position of the surface $\Sigma$ in our case cannot be interpreted as a modification of the background metric $g_{\mu\nu}$ but is given by a modification of the embedding coordinates $X^\mu$ (do not confuse with $x^\mu$, which represent the spacetime coordinates). Furthermore, apart from the surface, the function $F_2$ will also have an explicit dependence in these embedding coordinates. The procedure we are going to follow is analogous to that shown in Appendix B of \cite{Feng:2017ygy}. 

For simplicity, we will write the volume form of $\Sigma$ as $d\Omega=d^d x\sqrt{h}$ and taking into account that the variation of this form generated by a vector field $\delta X$ (corresponding to the variation of the position of the surface)

\begin{equation}
    \delta d\Omega=\pounds_{\delta X} d\Omega.
\end{equation}

The displacement vector $\delta X$ might be decomposed as the sum of two terms: one normal to $\Sigma$ and another tangent to the latter surface. That is \cite{Feng:2017ygy}

\begin{equation}
\begin{split}
    \delta X^\mu&=\delta a n^\mu +\delta b^\mu,\\
    \delta a&= - n_\mu \delta X^\mu,\\
    \delta b^\mu&= h_\nu^\mu \delta X^\nu,
\end{split}
\end{equation}

where $n^\alpha$ is the vector normal to $\Sigma$. In can be shown that, $\delta d\Omega$ be written as \cite{Feng:2017ygy}

\begin{equation}
    \delta d\Omega=(D_i\delta b^i+\delta a K)d\Omega,
\end{equation}
where $K$ is the extrinsic curvature of $\Sigma$, $D_i$ is the covariant derivative induced on it and $i$ runs from $1$ to $d$. Hence, the variation of $\delta W$ reads

\begin{equation}
    \delta W=\int_\Sigma \left(\delta d\Omega F_2+d\Omega \frac{\partial F_2}{\partial X^\mu} \delta X^\mu\right).
\end{equation}




Applying integration by parts and Stokes theorem

\begin{equation}
    \delta W = \int_\Sigma \left[\delta a K F_2 +\frac{\partial F_2}{\partial X^\mu}(\delta b^\mu+n^\mu \delta a)-D_\mu F_2\delta b^\mu \right]d\Omega+\int_{\partial \Sigma} \sqrt{|\gamma|} F_2\delta b^i d\sigma_i
\end{equation}

where $\gamma$ is the deteminant of the induced metric on $\partial \Sigma$. Keeping in mind that we do not vary of the boundary's surface, the last term just vanish. On the other hand, the extremality condition imposes that $\delta W=0$ for arbitrary variations $\delta a$ and $\delta b^i$ so

\begin{equation}
\left\{ \begin{array}{l}
\left[KF_2+\frac{\partial F_2}{\partial X^\mu}n^\mu\right]_{X^\mu=\tilde{X}^{\mu}}=0\\
\left[\frac{\partial F_2}{\partial X^\mu}-D_\mu F_2\right]_{X^\mu=\tilde{X}^{\mu}}=0\\
\end{array}\right. \quad \forall \tilde{X}\in \Sigma.
\end{equation}

In particular, if there is a point $\tilde{X}\in \Sigma$ such that $F_2(\tilde{X})\geq 0$ it is immediate that 

\begin{equation}
\left\{ \begin{array}{l}
\left[K|F_2|+K\frac{\partial |F_2|}{\partial X^\mu}n^\mu\right]_{X^\mu=\tilde{X}^{\mu}}=0\\
\left[\frac{\partial |F_2|}{\partial X^\mu}-D_\mu |F_2|\right]_{X^\mu=\tilde{X}^{\mu}}=0\\
\end{array}\right. 
\end{equation}

or if $F_2(\tilde{X})\leq 0$ 

\begin{equation}
\left\{ \begin{array}{l}
-\left[K|F_2|+\frac{\partial |F_2|}{\partial X^\mu}n^\mu\right]_{X^\mu=\tilde{X}^{\mu}}=0\\
-\left[\frac{\partial |F_2|}{\partial X^\mu}-D_\mu |F_2|\right]_{X^\mu=\tilde{X}^{\mu}}=0\\
\end{array}\right. \quad \forall \tilde{X}\in \Sigma.
\end{equation}

So it is verified that for any configuration that extremizes $\int_\Sigma d^{d}\sigma \sqrt{h} F_2$ will also extremize $\int_\Sigma d^{d}\sigma \sqrt{h} |F_2|$.
        
\end{proof}

\section{Comments on negative measures}\label{app B}

Many of the subtleties  in the program arise from the fact that $F_2$ can be negative. In this appendix, we discuss another approach we could naively take and show why it doesn't work. 
We start with the assuming the existence of both positive and negative measure threads in the optimization program itself.



Suppose that instead of optimizing $W(|F_2|)$ in Section \ref{sec: codimension 1}, we had optimized $W(F_2)$. In the case where this function is non-negative everywhere the previous derivation would be completely valid and would lead to the same optimal surface $\tilde{\Sigma}$. However, if $F_2$ is negative at some patch, an inconvenience would appear, since the optimal solutions for $W(|F_2|)$ and $W(F_2)$ might be different. This fact will become clearer later. In order to overcome this problem, we are going to define a measure $\mu$ that assigns a value of $0,\ 1$ or $-1$ to each thread. That is, $\mu:p\in \mathcal{P}\rightarrow \{-1,0,1\}$. This function can be split as the sum of two simpler ones $\mu=\mu_+-\mu_-$ such that $\mu_\pm: p\in \mathcal{P}\rightarrow \{0,1\}$. The density function does not modify its definition, but will be expressed as the sum of two densities:

\begin{equation}
\begin{split}
    \rho(x)&=\int_\mathcal{P}d\mu \int_p ds \delta(x-p(s))=\int_\mathcal{P}d\mu_+ \int_p ds \delta(x-p(s))-\int_\mathcal{P}d\mu_- \int_p ds \delta(x-p(s))\\
    &=\rho_+(x)-\rho_-(x),
\end{split}
\end{equation}

with

\begin{equation}
    \rho_\pm\equiv \int_\mathcal{P}d\mu_\pm \int_p ds \delta(x-p(s)).
\end{equation}

We will propose the following program to find the extremal surface $\Sigma$:

\begin{equation}
    \min \int_{\mathcal{P}}d\mu\ \textup{s.t. }\rho_+-\rho_-\geq F_2(x)\ \forall x\in \mathcal{M}.
\end{equation}

Here, it is important to remark that (as was explained before), although only one constraint is written, we are saying that there is one constraint per point in spacetime. This is the reason why, when the Lagrangian is proposed, an integral has to be introduced. This fact is crucial to understand why negative measures have to be introduced. Suppose that the function $F_2(x)$ is negative in a certain subset $U\subseteq \mathcal{M}$ and that we only work with positive measures. As a consequence of complementary slackness \cite{boyd_vandenberghe_2004}, it is known that, for the optimal configuration

\begin{equation}
    \lambda(x)(F_2(x)-\rho_+(x))=0\ \forall x\in \mathcal{M},
\end{equation}

and in particular

\begin{equation}
    \lambda(x)(-|F_2(x)|-\rho_+(x))=0\ \forall x\in U.
\end{equation}

Notice that for $x\in U$, either $\lambda(x)=0$ or $\rho_+=-|F_2(x)|$. By definition $\rho_+(x)$ is always non-negative so the last equality will never hold and $\lambda(x)=0$ $\forall x\in U$. At a first sight, this does not seem to be a great problem. Nonetheless, it becomes evident when $U=\mathcal{M}$ since it would imply $\lambda(x)=0$ everywhere and we do not get any surface $\Sigma$ as a solution. Now, let us write down the Lagrangian associated to our program using signed measures

\begin{equation}
\begin{split}
    L=&\int_{\mathcal{P}}d\mu_+-\int_{\mathcal{P}}d\mu_-+\int_\mathcal{M}d^dx\sqrt{-g}\lambda(x)(F_2(x)-\rho_++\rho_-)\\
    =&\int_\mathcal{M}d^dx\sqrt{-g}\lambda(x)F_2(x)+\int_{\mathcal{P}}d\mu_+\left(1-\int_p ds\lambda\right)+\int_{\mathcal{P}}d\mu_-\left(\int_p ds\lambda-1\right).
\end{split}
\end{equation}

The dual program is then

\begin{equation}
    \max \int_\mathcal{M}d^dx\sqrt{-g}\lambda(x)F_2(x),\ \textup{s.t. } \int_p ds\lambda\leq 1\ \forall p\in\mathcal{P}_+,\ \int_p ds\lambda\geq 1\ \forall p\in\mathcal{P}_-.
\end{equation}

One might think that the optimal solution takes place when $\lambda(x)$ is a delta function supported on the Cauchy slice that makes the function $\sqrt{h}F_2(x)$ maximal (see Fig \ref{negative measure threads}) such that:

\begin{equation}
    \min \int_{\mathcal{P}}d\mu=\max \int_{\Sigma} \sqrt{h}F_2(x).
\end{equation}

The problem found with this approach is the fact that, as we are considering functions that might be negative in some patches, a maximization program might not capture all the possible solutions because those that minimize the functional are not covered in this case.


\begin{figure}
    \centering
    \includegraphics[scale=0.4]{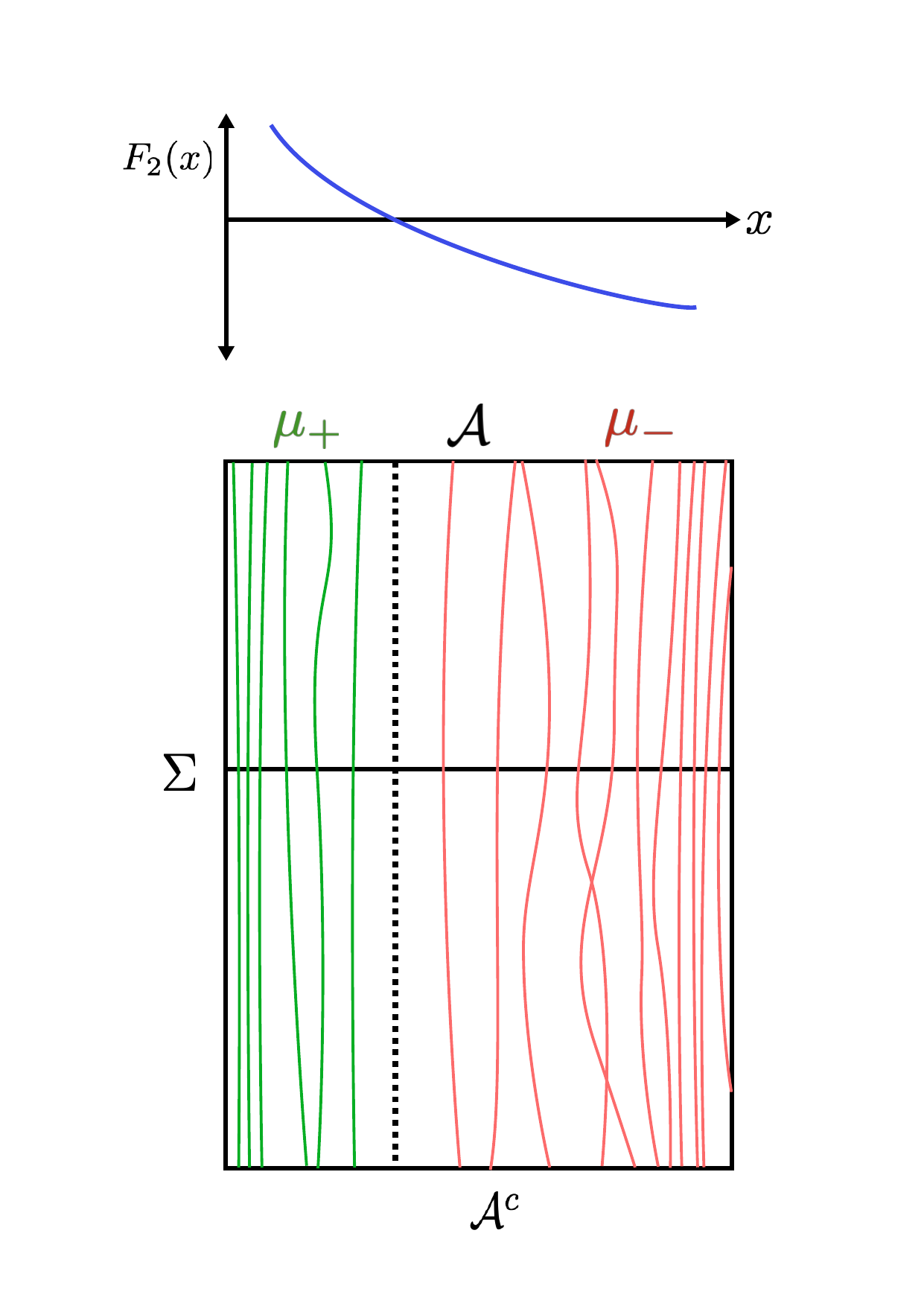}
    \caption{Negative measured threads where $F_2(x)$ is negative.}
    \label{negative measure threads}
\end{figure}

\end{appendices}
\vfill\pagebreak

\bibliographystyle{jhep}
\bibliography{references.bib}
\end{document}